\documentclass[11pt,prd,aps,amsfonts,eqsecnum,superscriptaddress,nofootinbib,longbibliography,notitlepage]{revtex4-1}

\usepackage{lmodern}    
\usepackage{microtype}  
\usepackage[english]{babel}
\usepackage{setspace}

\usepackage{fancyhdr}

\usepackage{caption}
\usepackage{graphicx}
\usepackage{xcolor}
\usepackage{subcaption}
\usepackage{rotating}
\usepackage{amsmath, amssymb, graphics, amsthm, isomath}
\usepackage{physics}
\usepackage{verbatim}
\usepackage{tikz}
\usepackage{tikz-cd}
\usepackage{circuitikz}
\usetikzlibrary{patterns}
\usetikzlibrary{through,hobby}
\usetikzlibrary{decorations.markings}
\usetikzlibrary{fadings,shadings}
\usetikzlibrary{plotmarks}
\usetikzlibrary{positioning}
\usetikzlibrary{decorations,arrows}
\usetikzlibrary{decorations.pathreplacing}
\usetikzlibrary{chains, scopes, positioning, backgrounds, shapes, fit, shadows, calc, arrows.meta, decorations.pathreplacing}
\usepackage{blkarray}
\usepackage{svg}
\usepackage{mathrsfs}
\usepackage{algorithm}
\usepackage{algpseudocode}

\usepackage{xcolor}
\definecolor{mycitecolor}{rgb}{0.0, 0.45, 0.85}   

\usepackage[
  colorlinks=true,
  linkcolor=mycitecolor,
  citecolor=mycitecolor,
  urlcolor=mycitecolor,
  hyperindex=true,
  linktocpage=true
]{hyperref}

\usepackage[capitalise,compress]{cleveref}
\usepackage{tcolorbox}

\newcommand{\diff}{\mathrm{d}}

\newcommand{\edge}{\text{edge}}

\DeclareMathOperator{\arccosh}{arccosh}

\renewcommand\equiv{:=}
\renewcommand\epsilon{\varepsilon}

\newtheoremstyle{upright-hang}
  {6pt}   
  {6pt}   
  {\normalfont\hangindent=1.5em\hangafter=1\relax} 
  {0pt}   
  {\bfseries} 
  {.}     
  {0.5em} 
  {}      

\theoremstyle{upright-hang}
\newtheorem{thm}{Theorem}

\newtheorem{prop}[thm]{Proposition}

\renewcommand{\thesection}{\arabic{section}}
\renewcommand{\thesubsection}{\thesection.\arabic{subsection}}
\renewcommand{\thesubsubsection}{\thesubsection.\arabic{subsubsection}}

\makeatletter
\renewcommand{\p@subsection}{}
\renewcommand{\p@subsubsection}{}
\makeatother

\makeatletter
\fancypagestyle{plain}{%
  \fancyhf{}%
  \fancyfoot[C]{\thepage}%
}
\fancypagestyle{revtexfooter}{%
  \fancyhf{}%
  \fancyfoot[C]{\thepage}%
}
\makeatother
\AtBeginDocument{\setstretch{1.25}\pagestyle{revtexfooter}}
\makeatletter
\renewcommand\frontmatter@title@format{\LARGE\bfseries\centering\parskip\z@skip}

\def\frontmatter@affiliationfont{\normalfont\selectfont}
\newcommand{\AndySectionStar}[1]{%
  \par\addpenalty\@secpenalty
  \addvspace{3.5ex \@plus 1ex \@minus .2ex}%
  \noindent{\normalfont\Large\bfseries #1\par}%
  \nobreak\vspace{2.3ex \@plus .2ex}%
  \@afterheading}
\newcommand{\AndySubsectionStar}[1]{%
  \par\addpenalty\@secpenalty
  \addvspace{3.25ex \@plus 1ex \@minus .2ex}%
  \noindent{\normalfont\large\bfseries #1\par}%
  \nobreak\vspace{1.5ex \@plus .2ex}%
  \@afterheading}
\newcommand{\AndySubsubsectionStar}[1]{%
  \par\addpenalty\@secpenalty
  \addvspace{3.25ex \@plus 1ex \@minus .2ex}%
  \noindent{\normalfont\normalsize\bfseries #1\par}%
  \nobreak\vspace{1.5ex \@plus .2ex}%
  \@afterheading}
\renewcommand\@seccntformat[1]{\csname the#1\endcsname\quad}
\def\@hangfrom@section#1#2#3{\@hangfrom{#1#2}#3}
\def\@hangfroms@section#1#2{#1#2}
\renewcommand\section{\@ifstar{\AndySectionStar}{\@startsection {section}{1}{\z@}%
  {-3.5ex \@plus -1ex \@minus -.2ex}%
  {2.3ex \@plus .2ex}%
  {\normalfont\Large\bfseries}}}
\renewcommand\subsection{\@ifstar{\AndySubsectionStar}{\@startsection{subsection}{2}{-\parindent}%
  {-3.25ex\@plus -1ex \@minus -.2ex}%
  {1.5ex \@plus .2ex}%
  {\normalfont\large\bfseries}}}
\renewcommand\subsubsection{\@ifstar{\AndySubsubsectionStar}{\@startsection{subsubsection}{3}{-\parindent}%
  {-3.25ex\@plus -1ex \@minus -.2ex}%
  {1.5ex \@plus .2ex}%
  {\normalfont\normalsize\bfseries}}}
\renewcommand*\l@section[2]{%
  \ifnum \c@tocdepth >\z@
    \addpenalty\@secpenalty
    \addvspace{0.75em \@plus\p@}%
    \begingroup
      \parindent\z@
      \rightskip\@pnumwidth
      \parfillskip-\@pnumwidth
      \leavevmode\bfseries
      \setlength\@tempdima{2.3em}%
      \advance\leftskip\@tempdima
      \hskip-\leftskip
      #1\nobreak\hfil\nobreak\hb@xt@\@pnumwidth{\hss #2}\par
    \endgroup
  \fi}
\renewcommand*\l@subsection{\@dottedtocline{2}{1.5em}{3.2em}}
\renewcommand*\l@subsubsection{\@dottedtocline{3}{3.8em}{4.1em}}
\makeatother

\usepackage{mathtools}
\tikzstyle{densely dashed}= [dash pattern=on 4pt off 3pt]

\usepackage{dsfont}

\allowdisplaybreaks

\begin{document}

\title{Non-perturbative saturation of Krylov complexity, and its implications in quantum gravity}

\author{Andrew Lucas}
\email{andrew.j.lucas@colorado.edu}
\affiliation{Department of Physics and Center for Theory of Quantum Matter, University of Colorado, Boulder CO 80309, USA}

\author{Amit Vikram}
\affiliation{Department of Physics and Center for Theory of Quantum Matter, University of Colorado, Boulder CO 80309, USA}

\begin{abstract}
 We study the dynamics of Krylov state complexity in finite-dimensional quantum systems.  Orthogonal polynomials built on a discrete energy spectrum crowd away from regions where the density of states is low at large Krylov index.  The physical implication of this result is that, even using a minimalist Krylov state complexity which is defined over the entire spectrum, the Krylov complexity provably stops growing a little past the Heisenberg length in every low-energy state.  This has interesting implications for the proposal that Krylov complexity is related to the wormhole length in 2D quantum gravity.
\end{abstract}

\maketitle

\tableofcontents

\section{Introduction}
 The ``complexity = volume" conjecture \cite{Susskind:2014rva, StanfordSusskindComplexity} has received much recent attention \cite{ComplexityAnything, RecentComplexityReview} as a potential quantitative link between concepts from quantum information and quantum gravity.  In a nutshell, the idea is as follows:  there is a natural notion of quantum state complexity corresponding to the circuit depth necessary to prepare an $n$-qudit state starting from a product state.  This complexity $C$ is expected to grow linearly with time $t$ until the Heisenberg time $t_{\mathrm{H}}\sim \mathrm{e}^{S(E)}$; here $S(E)$ is the entropy of the quantum system at energy $E$.  At the Heisenberg time, the state becomes as complicated as possible given the effectively finite-dimensional Hilbert space seen at energy $E$, so complexity should saturate.   This behavior is qualitatively the same as the behavior of the volume of the interior of a two-sided black hole in gravity, initially prepared in a simple Hartle-Hawking state.  Semiclassical calculations reveal that this volume also grows linearly with time at times well before the Heisenberg time.   Although we do not yet have quantitative control over non-perturbatively large times $t\sim t_{\mathrm{H}}$ in quantum gravity, we know that the interior volume should also saturate, consistent with the complexity = volume conjecture.  Predicting the quantitative details of this saturation is then a non-trivial test for any theory of non-perturbative quantum gravity \cite{Iliesiu:2021ari, Iliesiu:2024cnh}.

Recently, there has been progress on comparing volume to a somewhat distinct notion of complexity known as Krylov complexity \cite{Parker:2018yvk,spreadcomplexity, KrylovReview, SonnerKrylovReview}.  Most concretely, \cite{Lin,Sonner1,Sonner2,Heller:2024ldz} have shown quantitative relationships between the semiclassical Hamiltonian of 2D quantum (JT) gravity and (low-energy) Hamiltonian dynamics between eigenstates of the Krylov complexity operator (which we review in Section \ref{sec:krylov}) in a specific microscopic model for 2D gravity known as the double-scaled Sachdev-Ye-Kitaev (DSSYK) model \cite{MaldacenaStanford, KitaevSuh, BerkoozSYK}.  The precise quantitative setting studied in \cite{Sonner1,Sonner2,Heller:2024ldz} uses a definition of Krylov complexity that captures the \emph{entire} spectrum of DSSYK -- not merely the low-energy sector.    A priori then, one might worry that this Krylov complexity might only saturate at the maximal Heisenberg time $\mathrm{e}^{S_{\text{max}}}$ associated with infinite temperature dynamics, rather than the Heisenberg time scale $\mathrm{e}^{S(E)}$, which depends on the energy $E$ of the initial state.  This concern is serious enough that other authors \cite{Kar:2021nbm,VijaySaturation} have proposed differing interpretations of ``Krylov complexity = volume" to ensure saturation at the Heisenberg time.

In this paper, we prove that the simple formulation of Krylov complexity used in \cite{Sonner1,Sonner2,Heller:2024ldz} actually saturates at time scales only a little bit larger than the Heisenberg time, despite the fact that it is sensitive to the entire UV Hilbert space!   Our results rely on mathematical bounds from the theory of orthogonal polynomials on finite sets \cite{baik}, developed in Section \ref{sec:polynomials}.  The application of these results to the main puzzle introduced above is then presented in Section \ref{sec:discussion}.  In DSSYK, we will find that the Krylov complexity saturates at values $\le  \mathrm{e}^{S(E)}  \log  E_{\text{max}}$, where $ E_{\max}$ is the bandwidth of the spectrum.  Our results hold for any low-energy state, including but not limited to the thermofield double state, which is mostly commonly considered in the quantum gravity literature.
 
 Combined with the quantitative matching between semiclassical gravity and early time Krylov complexity dynamics, our results suggest how Krylov state complexity may indeed be consistent with ``complexity = volume" even in a non-perturbative theory of quantum gravity.   There may also be quantitative relationships between Krylov state complexity and other notions of complexity, such as quantum ergodicity \cite{dynamicalqergodicity, JTreconstruction, dynamicalqentanglement}, which we recently argued is naturally related to a minimalist proposal for a length operator in non-perturbative JT gravity~\cite{JTreconstruction}.  We leave a detailed exploration of these questions to future work.

\section{Krylov state complexity}\label{sec:krylov}
In this section, we first review the proposal of \cite{Lin} for a candidate wormhole length operator in 2D quantum gravity, based on Krylov complexity \cite{Parker:2018yvk, spreadcomplexity}.   

\subsection{Basic facts about Krylov complexity}

Consider a discrete quantum mechanical system with a total Hilbert space dimension $d \sim \mathrm{e}^{S_{\text{max}}}$ which is non-perturbatively large, and Hamiltonian\footnote{We will use $z$, rather than $E$, for energy until Section \ref{sec:discussion}.  $E$ will refer to energy in the gravity theory, and it will be a rescaled version of $z$.} \begin{equation}
    H = \sum_{m=1}^d z_m |z_m\rangle\langle z_m|.
\end{equation} We assume the spectrum is non-degenerate for simplicity.  The energy levels are drawn from a specific realization of a random draw from a density of states \begin{equation}
    \rho(z) = \mathrm{e}^{S_0} \rho_0(z), \label{eq:rhoE}
\end{equation}
where $S_0$ is taken to be large (but finite), and $\rho_0(z)$ is a function with compact support: \begin{equation}
    \rho_0(z) = \mathrm{\Theta}(z_{\text{max}} - |z|) \mathrm{e}^{\delta S(z)}. \label{eq:rho0E}
\end{equation}
Define the seed Krylov state \begin{equation}
    |K_0\rangle := \frac{1}{\sqrt{d}}\sum_{m=1}^d |z_m\rangle . \label{eq:unbiasedKrylov0}
\end{equation}
Then we inductively define all subsequent Krylov states $|K_n\rangle$ for $1\le n\le d-1$ as follows:  \begin{equation}
    |K_{n+1}\rangle \propto \left(1 - \frac{|K_n\rangle \langle K_n|}{\langle K_n|K_n\rangle} -\frac{|K_{n-1}\rangle \langle K_{n-1}|}{\langle K_{n-1}|K_{n-1}\rangle}  \right) H|K_n\rangle .
\end{equation}
As is well-known from the theory of orthogonal polynomials and Lanczos coefficients, this prescription generates an orthonormal basis in which the Hamiltonian $H$ is tridiagonal:  $\langle K_j | H|K_k\rangle = 0$ if $|j-k|>1$.   Having found this basis, it is often useful to write it as: \begin{equation}
    H |K_n\rangle = a_n |K_{n+1}\rangle + b_n |K_n\rangle + a_{n-1}|K_{n-1}\rangle . \label{eq:krylovchain}
\end{equation}
By construction, the coefficients $a_n$ and $b_n$ are real.   The coefficients $a_n$ are called the Lanczos coefficients.

It has been argued \cite{Lin,Sonner1,Sonner2,Heller:2024ldz} that the Krylov state complexity operator \begin{equation}
    X := \sum_{n=0}^{d-1} n |K_n\rangle \langle K_n|  \label{eq:krylov_length}
\end{equation}
is a good candidate for a wormhole length operator in DSSYK.   We will return to the specific context of quantum gravity in Section \ref{sec:discussion}.  

For now, though, we focus on some generic results about Krylov complexity that follow simply from the asymptotics of the Krylov coefficients \begin{equation}
        a_m = C m^\alpha + \cdots , \label{eq:am_asymptotics}
    \end{equation}
    at large $m$.  In what follows, we assume that the $\cdots$ are not rapidly oscillating small fluctuations.  Suppose that we are handed a state \begin{equation}
    |\psi(t)\rangle = \sum_{m=0}^{d-1}\psi_m(t)|K_m\rangle.
\end{equation}Notice that  \begin{equation}
    \frac{\mathrm{d}}{\mathrm{d}t}\langle X\rangle = \langle \mathrm{i}[H,X]\rangle = \sum_{m=0}^{d-2} \mathrm{i}a_m \left[\bar \psi_m \psi_{m+1}-\bar\psi_{m+1}\psi_m \right].
    \end{equation}
    If $\psi_m$ is a ``smooth" function of $m$, we might heuristically estimate that, given \eqref{eq:am_asymptotics}, \begin{equation}
        \frac{\mathrm{d}}{\mathrm{d}t} \langle X\rangle \sim \langle X\rangle^\alpha.
    \end{equation}
    Supposing that $0<\alpha<1$, for example, we would then deduce that $\langle X\rangle \sim t^{1/(1-\alpha)}$.  In general, such a scaling is basically inevitable given \eqref{eq:am_asymptotics}.  Assuming that $b_n$ is approximately constant, which is reasonable when $\rho_0(E)$ is an even function, then for any wave packet with most support at large $m$, 
    \begin{equation}
        \frac{\mathrm{d}^2}{\mathrm{d}t^2}\langle X^\beta \rangle = -\langle [H,[H,X^\beta]]\rangle  \sim C^2 \sum_m \left(2\alpha |\psi_m|^2 + \beta(\beta-1) |\psi_m - \psi_{m-2}|^2  \right)m^{\beta-2+2\alpha}.
    \end{equation}
    Taking $\beta = 2(1-\alpha)$ we find that $\partial_t^2 \langle X^\beta \rangle \gtrsim 1$, insinuating that the growth $X \sim t^{1/(1-\alpha)}$ can't be avoided.

    This is an important observation because, as we explained, Krylov complexity is supposed to grow linearly with time, after a short initial transient:  $\langle X\rangle \approx vt$ until $t\gtrsim \mathrm{e}^{S(E)}$.  This statement should be true both about the mean \emph{and} typical values of $X$.  We deduce that $\alpha=0$ in \eqref{eq:am_asymptotics}.
 
Another important observation is as follows.  At relatively short time scales (compared to $\mathrm{e}^{S_0}$) where we see the linear complexity growth, we expect that the Krylov basis $|K_n\rangle$ is very well approximated by a discretization of the continuous orthogonal polynomials $p_n(z)$ generated by the weight function $\rho(z)$: namely, $p_n(z)$ is a polynomial of degree $n$ and \begin{equation}
    \int \mathrm{d}z \; \rho(z) p_n(z) p_m(z) = \delta_{nm}.
\end{equation}
Indeed, these orthogonal polynomials also obey a recursion relation analogous to \eqref{eq:krylovchain}: \begin{equation}
    z p_n(z) = \tilde a_n p_{n+1}(z) + \tilde b_n p_n(z) + \tilde a_{n-1}p_{n-1}(z). \label{eq:polynomialchain}
\end{equation}
It is known \cite{lubinsky} that the asymptotic behavior of the Lanczos coefficients $\tilde a_n$ defined in \eqref{eq:polynomialchain} are governed by the scaling of $\rho(z)$ at large $z$.  For any reasonable (not strongly oscillating) choice of $\log \rho(z)$, if \begin{equation}
    \log \rho(z) \sim -c_0 |z|^{1/\alpha} + \cdots 
\end{equation}
at large $|z|$, then $\tilde a_n$ have the asymptotic growth \eqref{eq:am_asymptotics}.  In order to see $\alpha=0$, the spectrum should therefore be bounded, precisely as we argued in \eqref{eq:rho0E}. 

The boundedness of the spectrum was also emphasized in \cite{Kar:2021nbm}; however,  \cite{Kar:2021nbm} chose a microcanonical Krylov seed rather than the choice of \cite{Lin,Sonner1,Sonner2,Heller:2024ldz}.  We will argue for a different resolution in this paper. 

\subsection{When does Krylov complexity stop growing?}

We have explained above that if Krylov complexity \eqref{eq:krylov_length} is a suitable candidate for a length operator in quantum gravity, the spectrum of the theory must be bounded.  We then trivially have the operator identity $\lVert X\rVert \le d-1$, so at sufficiently late times, complexity must stop growing.   However, in quantum gravity, the complexity is presumed to stop growing much faster, at the Heisenberg time scale $\mathrm{e}^{S(E)}$, where $E$ is the energy scale of the state.

This has a very profound implication for the Krylov complexity conjecture.  Let us take any initial state $|\psi_0\rangle$ of typical energy $E_0$ and write it in the energy eigenbasis: \begin{equation}
    |\psi_0\rangle = \sum_{n=0}^{d-1} c_n |z_n\rangle.
\end{equation}
We have the basic identity 
\begin{equation}
    \lim_{T\rightarrow \infty }\frac{1}{T}\int\limits_0^T \mathrm{d}t \; \langle \psi_0(t)|X|\psi_0(t) \rangle = \sum_{m=0}^{d-1}m \sum_{n=0}^{d-1} \left| \langle z_n | K_m\rangle \right|^2 |c_n|^2  .
\end{equation}
If the Krylov complexity saturates at $\mathrm{e}^{S(E_0)} \ll d$ in a generic low-energy state, evidently $|\langle z_n|K_m\rangle |$ must be very small once $m \gg \mathrm{e}^{S(E_0)}$.   

It is, at first glance, far from obvious that the Krylov basis defined above will have this property.  Indeed, $|K_0\rangle$ is sensitive to the entire energy spectrum.  The main point of this paper is that, with some small but necessary caveats, we do in fact have this property, for every state concentrated at typical energy $E_0$.

\section{Orthogonal polynomials with a discrete spectrum}
\label{sec:polynomials}
In this section, we will establish the desired claim:  that Krylov polynomials overly sample low-energy states at small Krylov index $m$, such that low-energy states are overwhelmingly supported on $m \lesssim \mathrm{e}^{S(E_0)}$.  This is based on a rigorous mathematical result about orthogonal polynomials built on discrete spectra.  Indeed,  in the literature it is known \cite{baik} that as $S_0\rightarrow \infty$, orthogonal polynomials ``saturate" in regions where $\rho(E)$ is sufficiently small.   Unfortunately, calculating when this saturation occurs is a non-trivial problem using the results of \cite{baik}; we will find it easier to simply establish the claim ourselves, using simpler tools from the theory of orthogonal polynomials. 

\subsection{Statement of the main result}
 Let $\mathcal{E} = \lbrace z_k\rbrace_{k\in\mathbb{Z}_d}$ denote the set of energy levels (with the $z_k$ being real-valued), with $|\mathcal{E}| = d$. The Krylov basis is a set of orthonormal polynomials $p_n(z)$ with respect to the measure
\begin{equation}
	\diff\mu(z) = \sum_{k =0}^{d-1} \delta(z-z_k) \diff z,
\end{equation}
satisfying $p_n(z) = \mathrm{\Theta}(z^n)$ and
\begin{equation}
	\int \diff\mu(z)\ p_n(z) p_m(z) = \delta_{mn}.
\end{equation}
Moreover, they satisfy the (diagonal) completeness relation at each $z_k$:
\begin{equation} \label{eq:pn_completeness}
    \sum_{n=0}^{d-1} \lvert p_n(z_k)\rvert^2 = 1.
\end{equation}
Note that this sums over all $d$ polynomials. Our main result, which is  proved in Appendix \ref{app:proof}, is that near the spectral edge, completeness is almost attained by a number of polynomials comparable to the size of an edge window:

\begin{thm}\label{thm:main}
    Select an edge interval $[z_0, z_{d_{\edge}-1}]$ containing $d_{\edge}$ levels, and an energy level $z_k$ within this interval ($0 < k < d_{\edge}$). Define the width ratio
    \begin{equation}
        \eta_k \equiv \frac{z_{d_{\edge}}-z_k}{z_{d-1}-z_{d_{\edge}}}
    \end{equation}
    and the logarithmic spectral rigidity potential at the edge:
    \begin{equation}
        \nu_{k,\edge} \equiv \frac{1}{d_{\edge}} \sum_{j \in \mathbb{Z}_{d_{\edge}} \setminus \lbrace k\rbrace} \ln\left(\frac{z_{d-1}-z_0}{|z_k-z_j|}\right).
    \label{eq:nuedgedef}
    \end{equation}
    Then, for any $\epsilon > 0$, as long as the edge is large enough such that:
    \begin{equation}
        \nu_{k, \edge} d_{\edge} \geq \ln\left\lbrace(\epsilon^2/d)^{1/2}\right\rbrace,
    \end{equation}
    the first $q_k$ polynomials attain completeness to resolution $1/(1+\epsilon^2)$ at $z_k$,
    \begin{equation}
        \sum_{n=0}^{q_k} |p_n(z_k)\rvert^2 \geq \frac{1}{1+\epsilon^2},
    \end{equation}
    for any $q_k$ satisfying:
    \begin{equation}
        q_k \geq d_{\edge}\left[\frac{\nu_{k,\edge} + \dfrac{1}{2 d_{\edge}}\ln\left(\dfrac{4d}{\epsilon^2}\right)}{2\sqrt{\eta_k}}\left(\dfrac{1+2\eta_k + 2\sqrt{\eta_k+\eta_k^2}}{\sqrt{1+\eta_k}+\sqrt{\eta_k}}\right) + 1-\frac{1}{d_{\edge}}\right].
	\label{eq:saturationbound}
    \end{equation}
\end{thm}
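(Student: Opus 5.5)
The approach rests on the variational (Christoffel function) characterization of the partial completeness sum. For any $q$,
\begin{equation*}
\Big(\sum_{n=0}^{q}|p_n(z_k)|^2\Big)^{-1} = \min_{\deg P\le q,\ P(z_k)=1}\int \diff\mu\,|P|^2 = \min_{\deg P\le q,\ P(z_k)=1}\Big(1+\sum_{j\neq k}|P(z_j)|^2\Big).
\end{equation*}
This follows by expanding $P=\sum_n c_n p_n$ and applying Cauchy--Schwarz. The theorem therefore reduces to a construction: exhibit one polynomial $P$ of degree $q_k$ with $P(z_k)=1$ and $\sum_{j\ne k}|P(z_j)|^2\le\epsilon^2$.

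I will build $P$ as a product $P=L\cdot T$ of two factors.

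\textbf{Lagrange factor.} The first factor is
\begin{equation*}
L(z)=\prod_{j\in\mathbb{Z}_{d_\edge}\setminus\{k\}}\frac{z-z_j}{z_k-z_j},
\end{equation*}
of degree $d_\edge-1$. It kills every other level in the edge window and satisfies $L(z_k)=1$. For a level $z_i$ with $i\ge d_\edge$, each ratio obeys $|z_i-z_j|/|z_k-z_j|\le (z_{d-1}-z_0)/|z_k-z_j|$. Hence $|L(z_i)|\le \mathrm{e}^{d_\edge\nu_{k,\edge}}$, which is exactly why $\nu_{k,\edge}$ appears.

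\textbf{Chebyshev factor.} The second factor is a Chebyshev polynomial $T_m$ of degree $m$. I map $[z_{d_\edge},z_{d-1}]$ affinely onto $[-1,1]$, sending $z_k$ to the point $x_k$ outside the interval, and normalize by $T_m(x_k)$. Since $z_k<z_{d_\edge}$, the mapped point is $x_k=-(1+2\eta_k)$ up to orientation, so $|x_k|=1+2\eta_k$. Then on the remaining levels,
\begin{equation*}
|T(z_i)|\le \frac{1}{|T_m(x_k)|}\le \frac{2}{\big(1+2\eta_k+2\sqrt{\eta_k+\eta_k^2}\big)^m}.
\end{equation*}
This uses $|T_m(x)|\ge\tfrac12(|x|+\sqrt{x^2-1})^m$ for $|x|\ge1$.

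\textbf{Summing over levels.} There are fewer than $d$ remaining levels, so
\begin{equation*}
\sum_{j\ne k}|P(z_j)|^2\le 4d\,\mathrm{e}^{2d_\edge\nu_{k,\edge}}\big(1+2\eta_k+2\sqrt{\eta_k+\eta_k^2}\big)^{-2m}.
\end{equation*}
This is at most $\epsilon^2$ provided
\begin{equation*}
m\ \ge\ \frac{d_\edge\nu_{k,\edge}+\tfrac12\ln(4d/\epsilon^2)}{\ln\big(1+2\eta_k+2\sqrt{\eta_k+\eta_k^2}\big)}.
\end{equation*}
Setting $q_k=m+d_\edge-1$ gives the additive $d_\edge(1-1/d_\edge)$ term in \eqref{eq:saturationbound}.

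\textbf{Main obstacle: the logarithm.} The remaining step is to replace $\ln(1+2\eta+2\sqrt{\eta+\eta^2})=2\,\mathrm{arcsinh}\sqrt{\eta}$ by an explicit algebraic lower bound. The claimed form asks for
\begin{equation*}
\ln\big(1+2\eta+2\sqrt{\eta+\eta^2}\big)\ \ge\ 2\sqrt{\eta}\,\frac{\sqrt{1+\eta}+\sqrt\eta}{1+2\eta+2\sqrt{\eta+\eta^2}}.
\end{equation*}
Writing $y=\sqrt{\eta}+\sqrt{1+\eta}$ turns this into $2\ln y\ge 2\sqrt\eta/y$, with $\sqrt\eta=(y-y^{-1})/2$. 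That is, it becomes $\ln y\ge (1-y^{-2})/2$, which is true for $y\ge1$ by comparing derivatives at $y=1$. I expect verifying this inequality cleanly, and checking the stated hypothesis on $\nu_{k,\edge}d_\edge$, to take the most care. The hypothesis is mild, since $\ln\{(\epsilon^2/d)^{1/2}\}$ is negative whenever $\epsilon^2<d$. I would use it only to guarantee that the numerator $d_\edge\nu_{k,\edge}+\tfrac12\ln(4d/\epsilon^2)$ is positive, so that the bound on $m$ is meaningful; the construction above needs nothing further.
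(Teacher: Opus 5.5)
Your proposal is correct and follows essentially the same route as the paper: the Christoffel variational principle applied to the product of the edge-window Lagrange factor (bounded by $\mathrm{e}^{d_{\edge}\nu_{k,\edge}}$) and a normalized Chebyshev polynomial on $[z_{d_{\edge}},z_{d-1}]$. Your bound $|T_m(x)|\ge\tfrac12(|x|+\sqrt{x^2-1})^m$ and your inequality $\ln y\ge(1-y^{-2})/2$ do the same work as the paper's $\arccosh z\le\ln(2z)$ and $\ln(1+z)\ge z/(1+z)$. Your use of the hypothesis on $\nu_{k,\edge}d_{\edge}$ to keep the numerator positive, so that $m\ge 0$, matches its role in the paper, where it is stated as keeping the $\arccosh$ argument at least $1$.
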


In Appendix~\ref{app:logpotential}, we prove that the logarithmic potential $\nu_{k,\edge}$ is self-averaging in a suitable thermodynamic limit $d_{\edge}\rightarrow \infty$, under assumptions on the spectrum that seem overwhelmingly reasonable for DSSYK and other generic models with reasonable spectral rigidity properties: 
\begin{equation}
   \lim_{d_{\edge} \to \infty} \nu_{k, \edge} =\dfrac{\displaystyle  \int\limits_{z_0}^{z_{d_{\edge}}}\diff z\ \rho_0(z) \ln\left(\frac{z_{d-1}-z_0}{|z_k-z|}\right)}{\displaystyle \int\limits_{z_0}^{z_{d_{\edge}}}\diff z\ \rho_0(z)}.
    \label{eq:logpotential_thermo}
\end{equation}
We recall the definition of $\rho_0(z)$ in \eqref{eq:rhoE}. 

Returning to Theorem~\ref{thm:main}, note that if we take $d_{\edge}$ to be too close to $k$, $\eta_k$ is small and the first term becomes large. One reasonable choice is, say, $z_{d_{\edge}}-z_0 \approx (C+1) (z_k-z_0)$ where $C$ is some $\mathrm{\Theta}(1)$ constant.  In the $d_{\edge} \gg 1$ regime we have in mind, $z_{d_{\edge}}-z_0 \ll z_{d-1}-z_0$, and we can estimate (with the second $\eta_k$-dependent factor in parentheses in the first term in Eq.~\eqref{eq:saturationbound} being $\approx 1$, and $\nu_{k,\edge}$ dominating in the first factor) saturation by the index
\begin{equation}
q_k \lessapprox d_{\edge}\left[\frac{\nu_{k,\edge}}{2}\sqrt{\frac{z_{d-1}-z_0}{C(z_k-z_0)}} + 1\right].
\label{eq:saturationbound2}
\end{equation}


We emphasize that Theorem \ref{thm:main} is a completely general property of the Krylov basis, which can be applied to any quantum dynamical system, including those that are believed to be holographic duals to theories of quantum gravity.

\subsection{Intuitive arguments}
\label{sec:intuitive}
There is a simple and intuitive argument for Theorem \ref{thm:main} (the proof, however, relies on very different techniques). Suppose for simplicity that the spectrum $\rho(z)$ of energy levels is even:  $\rho(z)=\rho(-z)$, and that it extends all the way to $|z|=\Lambda$.  Define \begin{equation}
    \cos\theta := \frac{z}{\Lambda}.
\end{equation}  A classic result \cite{szego} from the theory of orthogonal polynomials states that, if we build the orthogonal polynomials $p_n(z)$ with respect to the \emph{continuous} weight $\rho_0(z)$, they will asymptote to (up to constants): \begin{equation}
    p_n(\theta) \rightarrow  \frac{\cos(n\theta + \gamma(\theta))}{ \sqrt{\rho_0(\Lambda \cos \theta) \Lambda \sin \theta}} \label{eq:pn_asymptotics}
\end{equation}
for some function $\gamma(\theta)$.  

As argued previously, we expect the Krylov $|K_n\rangle$ to look like simple discretizations \begin{equation}
    |K_n\rangle \approx \sum_{m=1}^d  p_n(z_m)|z_m\rangle \label{eq:krylov_continuous}
\end{equation}
when $n$ is sufficiently small.  At the same time, \eqref{eq:krylov_continuous} is inconsistent with \eqref{eq:pn_completeness} and \eqref{eq:pn_asymptotics} as $n$ gets too large.  Suppose for example that \eqref{eq:pn_asymptotics} was an accurate estimate:  $\langle z_m|K_n\rangle \approx p_n(z_m)$ for all $n \le N$.   Then we find that the completeness relation (with completeness magnitude $e^{S_0}$) forces: \begin{equation}
   e^{S_0} \gtrsim  \frac{N}{2} \frac{1}{\rho_0(z_m) \Lambda \sin\theta_m }.
\end{equation}
This becomes inconsistent once \begin{equation}
    N \gtrsim N_* =  e^{S_0} \rho_0(z_m) \sqrt{\Lambda^2-z_m^2}.
\end{equation}

The most natural resolution to this tension is that the Krylov polynomials simply ``crowd out" from the edges once $n>N_*$.  Another justification for this is that once the wavelength $\mathrm{\Delta}\theta \sim n^{-1}$ is comparable to the mean energy spacing:  $n \sim \rho(z_m)$ (up to an additional factor of $\mathrm{d}\theta/\mathrm{d}z$), the added polynomials have already resolved all possible functions defined only at the discrete energies $z_k$ for $k\lesssim m$.   Therefore, the completeness relation ought to be satisfied by $n\sim N_*$.   

It is easy to see this crowding effect numerically.  Figure \ref{fig:CUE} shows the overlaps $\langle z_m|K_n\rangle^2$ for a random matrix drawn from the circular unitary ensemble (CUE) with $d=512$.  More concretely, after choosing the eigenphases randomly according to CUE, we rescale them to $[-1,1]$ and subsequently evaluate the Krylov $|K_n\rangle$.    Note that the (coarse-grained) density of states is uniform within the interval of support for CUE, so the crowding effect arises entirely from the $\sin\theta$ factor in \eqref{eq:pn_asymptotics}. The polynomials also show some peaking behavior near saturation, an effect discussed in ~\cite{DebarghyaThesis}.

\begin{figure}[t]
    \centering
    \includegraphics[width=0.75\linewidth]{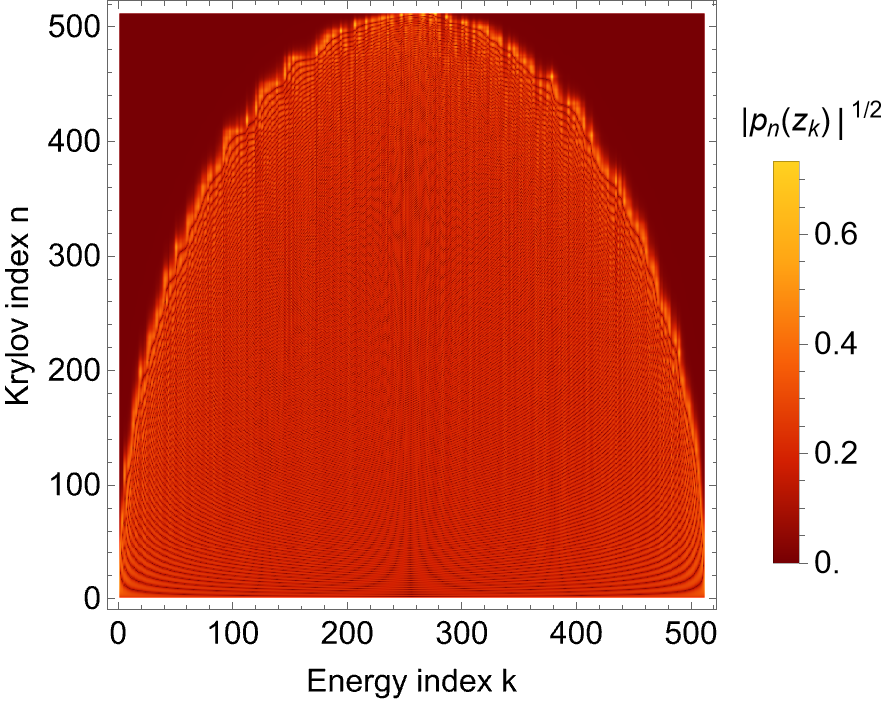}
    \caption{Demonstration of the ``withdrawal'' of the Krylov basis polynomials $p_n(z_k)$ from the edges for a CUE random matrix  with $d=512$ eigenphases (energy levels) rescaled to $[-1,1]$.} 
    \label{fig:CUE}
\end{figure}

\section{Implications for complexity = volume}
\label{sec:discussion}
Now that we have discussed Krylov complexity more quantitatively, we return to a discussion of quantum gravity.  We focus on the theory of quantum JT gravity \cite{Jackiw:1984je, Teitelboim:1983ux, Harlow:2018tqv} in two dimensions, which is particularly simple, and where the complexity = volume conjecture has been fairly quantitatively analyzed \cite{Lin,Sonner1,Sonner2,Heller:2024ldz}.  

In the semiclassical limit $S_0\rightarrow \infty$, quantum JT gravity is dual to a nonrelativistic quantum mechanics problem with a continuous and non-degenerate spectrum \cite{Harlow:2018tqv}: \begin{equation}
    H = \frac{p^2}{2} + 2\mathrm{e}^{-x}, \label{eq:semiclassical}
\end{equation}
where $[x,p]=\mathrm{i}$ and $x$ denotes the wormhole length operator.  This description is expected to break down at the Heisenberg time $t \sim \mathrm{e}^{S(E)}$ where $E$ is the energy of the state and $S$ is the entropy which, in JT gravity, takes the form
\begin{equation}
    S(E) = S_0 + 2\pi \sqrt{2E}.
\end{equation}
From the quantum gravity path integral, the breakdown of \eqref{eq:semiclassical} and its replacement with some discrete quantum theory is understood by resumming the higher-genus contributions to the path integral: see e.g. \cite{Iliesiu:2024cnh, Saad:2019lba}.   The quantum gravity path integral is performed in Euclidean time; unfortunately, it is very non-trivial to relate these Euclidean time calculations to the real-time saturation phenomenon of length.

The semiclassical effective Hamiltonian of JT gravity can be obtained from the Hilbert space of DSSYK as follows \cite{Lin, Sonner1, Sonner2}. We remind the reader that DSSYK~\cite{MaldacenaStanford, KitaevSuh, BerkoozSYK} corresponds to a theory of $N$ Majorana fermions (so the total Hilbert space dimension $d=2^{N/2}$), with all-to-all $q$-body interactions.  One introduces a parameter $\lambda = 2q^2/N$, and fixes $\lambda$ while $N\rightarrow \infty$.  The Krylov basis is defined as in \eqref{eq:unbiasedKrylov0}; in the semiclassical limit, as we discussed in Section \ref{sec:krylov}, this will be represented by the polynomial $p_0(z)$ in Sec.~\ref{sec:polynomials}. The rest of the Krylov basis is then represented by the higher polynomials $p_n(z)$. In this setting, JT gravity is obtained as a ``triple scaling'' $\lambda \to 0$ limit of DSSYK, with the identifications \cite{Lin,Sonner1,Sonner2}
\begin{align}
	x &= \lambda n, \label{eq:KrylovLength}\\
	z_k &= z_0+\lambda E_k
\end{align}
where $n$ is the Krylov index of $p_n(z)$, and $x$ and $E_k$ are respectively the wormhole length and the energy on the JT side.  We also note that the finite $\lambda$ DSSYK model has been related to sine-dilaton gravity \cite{SineDilatonDSSYK, Heller:2024ldz}.   Focusing for simplicity on $\lambda \rightarrow 0$, the Krylov chain Hamiltonian \eqref{eq:krylovchain} was shown to quantitatively reproduce \eqref{eq:semiclassical} at sufficiently small $x$ (well before the saturation scale). This limit takes the discrete Krylov index into a continuous length variable, and restricts the energy range of interest to $\mathrm{O}(\lambda)$ above the ground state. Another fact we will need is that the overall width of the DSSYK spectrum goes as $z_{d-1}-z_0 \sim 4J/\lambda$, making the JT region a small $\mathrm{O}(\lambda^2)$ fraction near the edge of the full DSSYK spectrum.

In some sense, the above mapping identifies JT gravity as a semiclassical/thermodynamic limit of a finite-dimensional quantum system (full DSSYK is presumed to be dual to a more general theory of 2D gravity). A key question is how the length behaves beyond the ``perturbative'' semiclassical regime. From general considerations, it is believed~\cite{Susskind:2014rva, Iliesiu:2021ari, Iliesiu:2024cnh} that a wormhole at energy $E$ would have a length that saturates after exploring the total number of states around energy $E$, roughly
\begin{equation}
    x_{\text{sat}} \sim \sqrt{2E} \mathrm{e}^{S(E)}.
    \label{eq:idealsaturation}
\end{equation}

For definiteness, let us call the prescription~\cite{Lin, Sonner1, Sonner2} in Eq.~\eqref{eq:KrylovLength} $x_K$. The Krylov basis forms an orthonormal basis of $d$ vectors for the Hilbert space. One may have therefore expected that the saturation of the Krylov basis happens only around $q \sim d$, the full size of the DSSYK spectrum, and therefore does not give a satisfactory Hilbert space size for length states. One proposal to address this problem~\cite{VijaySaturation} within the Krylov framework defines length in terms of the Krylov basis generated from a coherent Gibbs state at finite $\beta$ in place of Eq.~\eqref{eq:unbiasedKrylov0},
\begin{equation}
    \lvert \psi_\beta\rangle \propto \sum_{m=1}^{d} e^{-\beta E_m/2}\lvert z_m\rangle.
    \label{eq:finitetempKrylov0}
\end{equation}
Let us call the corresponding Krylov index $n_{\beta}$, and identify the length $x_{\beta} \propto n_{\beta}$ following \cite{VijaySaturation}. Then, Ref.~\cite{VijaySaturation} shows that the length $x_{\beta}(\lvert \psi_\beta(t)\rangle)$ \textit{of this same state}, evolved in time saturates at around $\exp[\mathrm{O}(S(E_{\beta}))]$, where~\cite{JTreconstruction} $E_{\beta} \sim \beta^{-2}$ is the energy around which this wavepacket is concentrated, with a spread of $\Delta E \sim \beta^{-3/2}$. As noted therein, this $\mathrm{e}^{\mathrm{O}(S(E))}$ saturation differs exponentially from the ideal $\mathrm{O}(\mathrm{e}^{S(E)})$ expectation, though is otherwise of a similar form and gives a finite Hilbert space corresponding to this length.

However, we note that if $\beta$ is finite (relative to the JT energies $E$), which is required for the finite Hilbert space, this prescription appears to have a dangerous incompatibility with the semiclassical description. In this limit, the spectrum becomes continuous, and one can define~\cite{Harlow:2018tqv, Iliesiu:2024cnh} an unambiguous semiclassical length operator $x_{\text{sc}}$ with eigenstates $\lvert x_{\text{sc}}\rangle$ in terms of the eigenstates $\lvert E\rangle$ of Eq.~\eqref{eq:semiclassical},
\begin{equation}
    \langle x_{\text{sc}}\vert E\rangle = 4\mathrm{K}_{\mathrm{i}\sqrt{8E}}(4 e^{-x_{\text{sc}}/2}),
\end{equation}
where $\mathrm{K}_{\nu}(z)$ is a modified Bessel function. By the Heisenberg uncertainty principle $\mathrm{\Delta}x_{\text{sc}}\mathrm{\Delta} p\gtrsim 1$, we heuristically have $\mathrm{\Delta}x_{\text{sc}} \gtrsim \sqrt{E}/\mathrm{\Delta}E \sim \beta^{1/2}$ for the state $\lvert \psi_{\beta}\rangle$ (for small $\beta$). This state has $\mathrm{\Delta} x_{\beta} = 0$ by definition, being its own Krylov seed, which differs substantially from the semiclassical estimate, making $x_{\beta} \neq x_{\text{sc}}$ in general. Two things are therefore apparent: (1) the $\beta \to 0$ limit is essential to correctly recover semiclassical JT dynamics, and (2) the saturation value of the Krylov length, both defined using and evaluated for the same finite temperature initial state \eqref{eq:finitetempKrylov0}, diverges in this limit with the accessible Hilbert space dimension $d_{\beta}$ at inverse temperature $\beta$.   A similar proposal \cite{Kar:2021nbm} uses microcanonical windows to define Krylov seeds, but similar concerns about Heisenberg uncertainty apply here too.


In this context, Sec.~\ref{sec:polynomials} implies that if the $\beta = 0$ Krylov procedure, which correctly recovers the semiclassical JT Hamiltonian~\cite{Lin, Sonner1, Sonner2}, is applied to a finite-energy initial state within the low-energy JT region of the spectrum (rather than the dynamics of the state $\lvert \psi_0\rangle$ itself, which is not contained within the JT region), then the length $x_K$ provably saturates fairly close to Eq.~\eqref{eq:idealsaturation}, with a mild logarithmic divergence that is determined by the parameter $\lambda$ rather than the overall Hilbert space dimension $d$.   To see this, we return to Eq.~\eqref{eq:saturationbound2} and take the spectral edge of size $d_{\edge}$ to correspond to the JT region $E_{\edge} = O(1)$ [or $z = z_0 + O(\lambda)$], with $z_k$ being any energy of interest within this region with $E_{\edge} = (C+1)E_k$. The corresponding value of $x = \lambda q(z_k)$ determines the length by which a state at energy $E_k$ saturates, i.e.
\begin{equation}
    x_{K, \text{sat}}(E_k) \lesssim \lambda d_{\edge}\left[\frac{\nu_{k,\edge}}{2}\sqrt{\frac{4J}{\lambda^2 C E_k}} + 1\right]
    \label{eq:saturationbound3}
\end{equation}

What remains is to estimate $\nu_{k,\edge}$. Combining $z_k-z_0 = \lambda E$ with $z_{d-1} - z_0 \sim 4 J/\lambda$, we can estimate that the typical term in Eq.~\eqref{eq:nuedgedef} scales as
\begin{equation}
    \ln\left(\frac{z_{d-1}-z_0}{|z_k-z_j|}\right) \sim \ln\left(\frac{4J/\lambda}{\lambda (E_k - E_j)}\right).
\end{equation}
From Eq.~\eqref{eq:logpotential_thermo}, we get
\begin{equation}
    \lim_{d_{\edge}\to\infty} \nu_{k,\edge} = \ln\left(\frac{4J}{\lambda^2}\right) - \frac{\int_{E_0}^{E_{\edge}} \diff E\ \rho(E) \ln|E_k-E|}{\int_{E_0}^{E_{\edge}} \diff E\ \rho(E)}.
\end{equation}
The second term is finite, and the first term dominates as $\lambda \to 0$. Therefore, denoting the second term by $L_{\edge}(E_k)$, we can write
\begin{equation}
	\nu_{k,\edge} \sim \ln \frac{4J}{\lambda^2}-L_{\edge}(E_k).
\end{equation}

Combining these ingredients, Eq.~\eqref{eq:saturationbound3} implies for the saturation length of the Krylov basis on the JT side (note that the second term goes to $0$ in the triple scaling limit, and is merely retained for completeness):
\begin{align}
	x_{K, \text{sat}}(E) &\lesssim \mathcal{N}((C+1)E)\left[\sqrt{\frac{J}{CE}}\left\lbrace \ln \frac{4J}{\lambda^2}-L_{\edge}(E)\right\rbrace+\lambda\right] \nonumber \\
    &\sim \sqrt{(C+1)E} \mathrm{e}^{S_0 + 2\pi \sqrt{2(C+1)E}} \left[\sqrt{\frac{J}{CE}}\left\lbrace \ln \frac{4J}{\lambda^2}-L_{\edge}(E)\right\rbrace+\lambda\right],
	\label{eq:KrylovSatLengthEstimate}
\end{align}
where $\mathcal{N}(E) \sim \sqrt{E} e^{S(E)}$ is the number of energy levels up to energy $E$ (the ``action'' variable in \cite{JTreconstruction}). We emphasize again that key differences between the saturation derived here and that in Ref.~\cite{VijaySaturation} include (1) the saturation value does not diverge with the inverse temperature $\beta$ (with $\beta = 0$ in our case), and (2) we work precisely within the $\beta = 0$ prescription of \cite{Lin, Sonner1, Sonner2} that gives the correct semiclassical JT Hamiltonian in the triple-scaled limit. We also note that this is \textit{independent} of the initial state whose dynamics we are interested in (different from the state used to generate the Krylov basis), and our result only constrains the length at which the component of any initial state at energy $E$ saturates. Additionally, the edge cutoff parameter $C$ can be chosen to be any finite number (including with $E$ dependence) in the thermodynamic limit, for example $C = J/E$; in that case, our bound on the saturation value becomes
\begin{align}
    x_{K, \text{sat}}(E) &\lesssim \sqrt{E+J} \mathrm{e}^{S_0 + 2\pi \sqrt{2(E+J)}} \left[\ln \frac{4J}{\lambda^2}-L_{\edge}(E)+\lambda\right] \nonumber \\
    &\approx \sqrt{E} \mathrm{e}^{S(E)} \left[\ln \frac{4J}{\lambda^2}-L_{\edge}(E)+\lambda\right] \text{ for } E \gg J.
\end{align}
For high energy states, this corresponds to length saturation at $\mathrm{O}(\sqrt{E} \mathrm{e}^{S(E)} \log (1/\lambda))$ which is comparable to $\mathrm{e}^{S(E)}$ up to the logarithmic factor.

Further, the completeness relation enforces $q \geq d_{\edge}-1$ before completeness is attained in \textit{all} points in the edge region (a basis cannot be complete in a subspace without having at least as many elements as the dimension of the subspace). For $x$, this typically translates to a lower bound corresponding to the second term in Eq.~\eqref{eq:KrylovSatLengthEstimate}:
\begin{equation}
	x_{K, \text{sat}}(E) \gtrsim \lambda \mathcal{N}((C+1)E). \label{eq:lower4}
\end{equation}
The naive saturating length is much closer to our upper bound \eqref{eq:KrylovSatLengthEstimate} than our lower bound \eqref{eq:lower4}.

These bounds should be contrasted with the ``ergodicity'' based prescription of \cite{JTreconstruction}, which constructs a length operator $x_{\theta}$ sensitive to the ``ergodic'' exploration of all states in a discrete Hilbert space corresponding to a given range of energies. There, the saturation length is shown to be
\begin{equation}
	x_{\theta, \text{sat}}(E) \sim \mathcal{N}(E) \sim \sqrt{E} e^{S_0 + 2\pi \sqrt{2E}},
\end{equation}
which directly corresponds to the size of the JT Hilbert space available up to energy $E$, matching the ideal expectation in Eq.~\eqref{eq:idealsaturation}. Eq.~\eqref{eq:KrylovSatLengthEstimate} constrains the saturation of the $\beta = 0$ Krylov basis to be very close to the ergodicity prescription, differing in order-of-magnitude from $O(\exp[S(E)])$ at most by a factor of $\log(1/\lambda)$ that mildly diverges in the $\lambda \to 0$ limit. We note that Eq.~\eqref{eq:KrylovSatLengthEstimate} is an upper bound, and the saturation may plausibly occur much closer to $\mathcal{N}(E)$ than what we can prove. We also regard this as additional evidence pointing towards a general nonperturbative relation between Krylov complexity and ergodicity in JT gravity, as conjectured in \cite{JTreconstruction}. 

A final relevant remark is that the DSSYK Hamiltonian need not exactly match \eqref{eq:semiclassical} at finite $\lambda$.  For example, the Hamiltonian of the dual gravity theory to DSSYK may be \cite{Heller:2024ldz} \begin{equation}
    H = \frac{p^2}{2} + 2\mathrm{e}^{-x} + \alpha  \lambda p^4 + \cdots .
\end{equation}
If one tracks the quantum evolution of a wave function in the Krylov basis of DSSYK vs. the ergodic basis of \cite{JTreconstruction} intended for pure JT, differences will arise at the same order as the $\alpha \lambda p^4$ term in the equation above.   Perhaps after accounting for these $\lambda$-dependent corrections to $H$, there is a ``simple" unitary that converts between the Krylov basis and the eigenbasis of the ergodic length operator from \cite{JTreconstruction}.

\section*{Acknowledgements}

This work was supported by the Heising-Simons Foundation under Grant 2024-4848. A.V. was supported by NIST, and by the National Science Foundation under Grant Number 1734006 (Physics Frontier Center).

GPT 5.4 Thinking pointed to Ref.~\cite{CVP} on the Christoffel Variational Principle, which was used to derive our main result, and to \eqref{eq:pn_asymptotics}. GPT 5.5 Thinking pointed to Ref.~\cite{ResidualPoly} on residual polynomials, while GPT 5.5 Pro pointed to \cite{baik}. Claude Fable 5 (and subsequently Opus 4.8) suggested improvements of an initially human-constructed variational ansatz and the resulting bound on being prompted to do so (in Appendix \ref{app:proof}), which are incorporated into our final results above (with typo corrections by Fable 5 and Opus 4.8). GPT 5.6 Sol pointed out a ``reversed'' inequality implication in the last part of Appendix~\ref{app:proof} in a previous version, which has now been corrected (with other typos). Appendix~\ref{app:logpotential} incorporates suggestions from GPT 5.5 and Claude Fable 5. All derivations were carried out and written up by the human authors, who take full responsibility for the claims.

\begin{appendix}

\makeatletter


\let\@sectioncntformat\@seccntformat

\let\@hangfrom@section\@hang@from


\renewcommand\section{\@ifstar{\AndySectionStar}{\@startsection{section}{1}{\z@}%

  {-3.5ex \@plus -1ex \@minus -.2ex}%

  {2.3ex \@plus .2ex}%

  {\normalfont\Large\bfseries}}}

\renewcommand\subsection{\@ifstar{\AndySubsectionStar}{\@startsection{subsection}{2}{\z@}%

  {-3.25ex\@plus -1ex \@minus -.2ex}%

  {1.5ex \@plus .2ex}%

  {\normalfont\large\bfseries}}}

\renewcommand\subsubsection{\@ifstar{\AndySubsubsectionStar}{\@startsection{subsubsection}{3}{\z@}%

  {-3.25ex\@plus -1ex \@minus -.2ex}%

  {1.5ex \@plus .2ex}%

  {\normalfont\normalsize\bfseries}}}

\makeatother

\renewcommand{\thesubsection}{\thesection.\arabic{subsection}}
\renewcommand{\theequation}{\thesection.\arabic{equation}}
\renewcommand{\thesubsubsection}{\thesubsection.\arabic{subsubsection}}
\makeatletter
\renewcommand\section{\@ifstar{\AndySectionStar}{\@startsection {section}{1}{-\parindent}%
  {-3.5ex \@plus -1ex \@minus -.2ex}%
  {2.3ex \@plus .2ex}%
  {\normalfont\Large\bfseries}}}
\renewcommand\subsection{\@ifstar{\AndySubsectionStar}{\@startsection{subsection}{2}{-\parindent}%
  {-3.25ex\@plus -1ex \@minus -.2ex}%
  {1.5ex \@plus .2ex}%
  {\normalfont\large\bfseries}}}
\renewcommand\subsubsection{\@ifstar{\AndySubsubsectionStar}{\@startsection{subsubsection}{3}{-\parindent}%
  {-3.25ex\@plus -1ex \@minus -.2ex}%
  {1.5ex \@plus .2ex}%
  {\normalfont\normalsize\bfseries}}}
\makeatother

\section{Proof of Theorem \ref{thm:main}}
\label{app:proof}

The central feature underlying our results is the nonperturbative withdrawal of the Krylov basis from the edges of the spectrum, such that the $\lvert \mathcal{K}_n\rangle$ become increasingly concentrated towards the center of the spectrum (of $H$) with increasing $n$ beyond a threshold. This behavior can be established using the Christoffel Variational Principle~\cite{CVP} as described below.  

The Christoffel-Darboux kernel for these polynomials is defined as
\begin{equation}
	\mathcal{K}_m(z,y) = \sum_{n=0}^{m} p_n(z) p_n^\ast(y).
\end{equation}
When $x=y$, the kernel is non-negative and measures the ``completeness'' of the first $m$ polynomials at $x$: $\mathcal{K}_m(x,x) \leq 1$, with equality indicating completeness.

The Christoffel Variational Principle (Theorem 9.2 in Ref.~\cite{CVP}) states that for all trial polynomials $Q_{m, z_0}(z)$ such that $Q_{m, z_0}(z_0) = 1$, $\deg(Q_{m, z_0}) \equiv m \leq q$,
\begin{align}
	\int\diff\mu(z) \lvert Q_{m, z_0}(z)\rvert^2 \geq \frac{1}{\mathcal{K}_q(z_0,z_0)} \label{eq:CVP}
\end{align}
Moreover, equality is achieved for $Q_{m, z_0}(z) = \mathcal{K}_q(z,z_0)/\mathcal{K}_q(z_0,z_0)$. While the explicit construction of such a polynomial is difficult, an efficient practical use of this principle is to make an ansatz for $Q_{m, z_0}(z)$ that allows us to limit the right hand side to as close to $1$ as possible. For definiteness, let us choose $z_0 = z_k$, a specific energy level ($k > 0$) in the spectrum.

To motivate our ansatz, let us take the order $q$ to be unrestricted. Then, a simple way to saturate \eqref{eq:CVP} subject to the constraint $Q_{m,z_k}(z_k) = 1$ is to choose $m=d-1$ and a polynomial that vanishes at all points in our spectrum:
\begin{equation}
	Q_{d-1,z_k}(z) = \prod_{j \in \mathbb{Z}_d \setminus \lbrace k\rbrace} \frac{z-z_j}{z_k-z_j}.
	\label{eq:exactzeroform}
\end{equation}
As $Q_{d-1,z_k}(z_{j\neq k}) = 0$, it follows that for $q=d-1$, $\mathcal{K}_{d-1}(z_k,z_k) = 1$. This is just the trivial statement that the first $d$ orthogonal Krylov vectors (recalling that $m$ starts at $0$) span the Hilbert space.

On the other hand, if we take $Q_{m,z_k}(z)$ to (broadly speaking) be a function that decays rapidly with $|x-z_k|$, such that
\begin{equation}
	|Q_{m,z_k}(z_{j\neq k})| \leq \frac{\epsilon}{\sqrt{d-1}},
	\label{eq:epsilonconstraint}
\end{equation}
where $0 < \epsilon \ll 1$, then for any $q \geq m$, \eqref{eq:CVP} gives
\begin{equation}
	\mathcal{K}_q(z_k,z_k) \geq \frac{1}{1+\epsilon^2},
\end{equation}
It is worth noting that while $\epsilon \ll 1$ is sufficient to ensure that the Krylov basis is mostly complete by $q=m$; if we set $\epsilon \ll 1/\sqrt{d}$, we can even show completeness short of only one basis vector.   

Intuitively, we would like to minimize $m$ to find the earliest index $q$ by which the Krylov basis is mostly complete at $z_k$. One option is to loosely take $Q_{m,z_k}(z)$ to decay rapidly with $|z-z_k|$. To ensure that $Q_{m,z_k}(z_{k\pm 1})$ is small as per \eqref{eq:epsilonconstraint}, we again need a very high order of polynomial to allow rapid fluctuations within a single level spacing. However, if we only require that this decay becomes relevant a certain finite distance from $z_k$ that skips over a fraction of points in the spectrum, then a low order polynomial suffices. To justify skipping over these points, we can use the exact zero form in \eqref{eq:exactzeroform} restricted to these points.


Let us now capture this intuition more precisely. We are primarily interested in $z_k$ near the lower edge of the spectrum (i.e. $z_0$). Take $\mathcal{E}_{\edge} \subseteq \mathcal{E}$ to be a window of $d_{\edge}$ consecutive states at (say) the lower edge of the spectrum, of width $z_{\edge}$, including $z_k$. We want our ``decay'' function to skip over all points in $\mathcal{E}_{\edge}$, which suggests the ansatz: 
\begin{equation}
	Q_{d_{\edge}-1+r, z_k}(z) = \left(\prod_{j \in \mathbb{Z}_{d_{\edge}} \setminus \lbrace k\rbrace} \frac{z-z_j}{z_k-z_j}\right) R_r(z),
	\label{eq:variationalansatz}
\end{equation}
where $R_r(z)$ is a rapidly decaying polynomial of degree $r$. Now our task is simply to get towards as small an $r$ as feasible such that \eqref{eq:epsilonconstraint} is satisfied.

More useful background comes from the theory of residual polynomials~\cite{ResidualPoly}. Where the Christoffel Variational Principle \eqref{eq:CVP} minimizes the $\mathrm{L}^2$ norm of a polynomial of degree at most $m$ given $Q_{m,z_0}(z_0) = 1$, the residual polynomial $R_{m,z_0}(z)$ of degree at most $m$ instead minimizes the $\sup$ norm $\sup_{z\in I} |R_{m,z_0}(z)|$ in an interval $I$, again given $R_{m,z_0}(z_0) = 1$.
In the connected interval $z \in [-1,1]$ that does not contain $z_0$, the residual polynomial is known to be a Chebyshev polynomial $T_m(z)$, defined by:
\begin{equation}
	T_m(z) = \cos(m \arccos(z)).
\end{equation}
For later use, we note that $|T_m(z \in [-1,1])| \leq 1$.

This suggests that Chebyshev polynomials are a class of rapidly decaying polynomials, and may be a useful component of a variational ansatz for our problem. In particular, we want to map $[z_{d_{\edge}},z_{d-1}]$ to $[-1,1]$ via
\begin{equation}
	\xi(z) = 2 \frac{z-z_{d_{\edge}}}{z_{d-1}-z_{d_{\edge}}}-1,
\end{equation}
and take $R_r(z)$ to be a Chebyshev polynomial in this domain:
\begin{equation}
	R_r(z) = \frac{T_r(\xi(z))}{T_r(\xi(z_k))} 
    \end{equation}
    such that 
    \begin{equation}|R_r(z \in [z_{d_{\edge}}, z_{d-1}])| \leq \frac{1}{|T_r(\xi(z_k))|} = \frac{1}{\cosh\left[r \arccosh\left(1-2\frac{z_k-z_{d_{\edge}}}{z_{d-1}-z_{d_{\edge}}}\right)\right]}.
\end{equation}

Now, returning to \eqref{eq:variationalansatz}, using $|z-z_j| < z_{d-1}-z_0$ we can bound the first factor according to:
\begin{equation}
	\left\lvert\prod_{j \in \mathbb{Z}_{d_{\edge}} \setminus \lbrace k\rbrace} \frac{z-z_j}{z_k-z_j}\right\rvert \leq e^{d_{\edge}\nu_{k,\edge}},
\end{equation}
where we have identified the spectral rigidity measure in the edge window, defined in \eqref{eq:nuedgedef}.  
We can therefore guarantee that \eqref{eq:epsilonconstraint} is satisfied if:
\begin{equation}
	\frac{e^{d_{\edge}\nu_{k,\edge}}}{\cosh\left[r \arccosh\left(1+2\frac{z_{d_{\edge}}-z_k}{z_{d-1}-z_{d_{\edge}}}\right)\right]} \leq \frac{\epsilon}{\sqrt{d}},
\end{equation}
which gives a lower bound on the degree $r$ of the Chebyshev polynomial:
\begin{equation}
	r \geq \frac{\arccosh\left[(d/\epsilon^2)^{1/2} e^{d_{\edge} \nu_{k,\edge}}\right]}{\arccosh\left(1+2\eta_k\right)}.
    \label{eq:rexactbound}
\end{equation}
where we have recognized
\begin{equation}
    \eta_k = \frac{z_{d_{\edge}}-z_k}{z_{d-1}-z_{d_{\edge}}}.
\end{equation}

To get a sense of what the above result implies, it is convenient to simplify this in terms of even more elementary functions, obtaining a less comprehensive but more intuitive bound than Eq.~\eqref{eq:rexactbound} (whose effect will turn out to be subleading in any case for large $d$, $d_{\\edge}$). From $\arccosh z = \ln(z+\sqrt{z^2-1})$, in the numerator we can use $\arccosh z \leq \ln (2z)$ for $z \geq 1$ (as $\sqrt{z^2-1} < z$ in this domain) and in the denominator $\ln(1+z) \geq z/(1+z)$ to get
\begin{equation}
    \frac{\arccosh\left[(d/\epsilon^2)^{1/2} e^{d_{\edge} \nu_{k,\edge}}\right]}{\arccosh\left(1+2\eta_k\right)} \leq \frac{d_{\edge} \nu_{k,\edge} + \dfrac{1}{2}\ln\left(\dfrac{4d}{\epsilon^2}\right)}{2\sqrt{\eta_k}\left(\dfrac{\sqrt{1+\eta_k}+\sqrt{\eta_k}}{1+2\eta_k + 2\sqrt{\eta_k+\eta_k^2}}\right)}
\end{equation}
for
\begin{equation}
    (d/\epsilon^2)^{1/2} e^{d_{\edge} \nu_{k,\edge}} \geq 1.
\end{equation}
Therefore, in this regime, Eq.~\eqref{eq:epsilonconstraint} is satisfied for all
\begin{equation}
    \frac{r}{d_{\edge}} \geq \frac{\nu_{k,\edge} + \dfrac{1}{2 d_{\edge}}\ln\left(\dfrac{4d}{\epsilon^2}\right)}{2\sqrt{\eta_k}}\left(\dfrac{1+2\eta_k + 2\sqrt{\eta_k+\eta_k^2}}{\sqrt{1+\eta_k}+\sqrt{\eta_k}}\right).
\end{equation}

This means that at a given $z_k$, near-completeness to any $\epsilon < 1$ is attained by the Krylov index $q(z_k) = d_{\edge}-1+r $ for any $r$ satisfying the bound above, i.e. we obtain \eqref{eq:saturationbound}.

\section{Continuum expression for the logarithmic potential}
\label{app:logpotential}

First, we have the following bound: 

\begin{prop} Let $\lbrace r_j\rbrace_{j=0}^{d_{\edge}-1}$ be a (not necessarily regular) lattice
of points in $[z_0, z_{d_{\edge}}]$ satisfying $r_{j+1} > r_j$, $|z_j-r_j| \leq \widetilde{\Delta_3}$ (evocative of the Dyson-Mehta spectral rigidity parameter~\cite{DysonMehtaSR}) and $\min_{j \neq k} |z_k-r_j| \geq \min_{k \neq j} |z_k - z_j| \equiv \delta z_{\min}$ (i.e. the lattice isn't too close to the single point $z_k$). Then, for any $\epsilon_J > \widetilde{\Delta_3}$, if $\mathcal{N}(z_k, \epsilon_J)$ is the number of $r_j$ within a distance of $\epsilon_J$ from $z_k$:
	\begin{equation}
		\mathcal{N}(z_k, \epsilon_J) \equiv \left\lvert \lbrace j \in \mathbb{Z}_{d_{\edge}}: |r_j - z_k| \leq \epsilon_J\rbrace\right\rvert,
	\end{equation}
	 we have
\begin{equation}
        \left\lvert \nu_{k,\edge} - \frac{1}{d_{\edge}} \sum_{j \in \mathbb{Z}_{d_{\edge}} \setminus \lbrace k\rbrace} \ln\left(\frac{z_{d-1}-z_0}{|z_k-r_j|}\right)\right\rvert \leq -\ln\left(1-\frac{\widetilde{\Delta_3}}{\epsilon_J}\right) + \frac{\mathcal{N}(z_k, \epsilon_J)}{d_{\edge}}\ln\left( \frac{z_{d_{\edge}}-z_0}{\delta z_{\min}}\right).
        \label{eq:rigidity_fluctuations}
    \end{equation}
\end{prop}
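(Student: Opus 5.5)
The difference is a sum of termwise log ratios, and I would bound it by splitting the indices into near and far points. The $\ln(z_{d-1}-z_0)$ pieces cancel, so
\begin{equation}
\nu_{k,\edge} - \frac{1}{d_{\edge}} \sum_{j \neq k} \ln\left(\frac{z_{d-1}-z_0}{|z_k-r_j|}\right) = \frac{1}{d_{\edge}} \sum_{j \in \mathbb{Z}_{d_{\edge}}\setminus\lbrace k\rbrace} \ln\frac{|z_k-r_j|}{|z_k-z_j|}.
\end{equation}
Let $J_{\text{far}} = \lbrace j\neq k : |r_j-z_k|>\epsilon_J\rbrace$, and let $J_{\text{near}}$ be its complement in $\mathbb{Z}_{d_{\edge}}\setminus\lbrace k\rbrace$. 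Then $|J_{\text{near}}| \le \mathcal{N}(z_k,\epsilon_J)$, and $|J_{\text{far}}| \le d_{\edge}$.

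\textbf{Far terms.} For $j\in J_{\text{far}}$, the triangle inequality gives $\big||z_k-z_j|-|z_k-r_j|\big|\le |z_j-r_j|\le \widetilde{\Delta_3}$. Writing $a=|z_k-r_j|>\epsilon_J>\widetilde{\Delta_3}$, this says $|z_k-z_j|\in[a-\widetilde{\Delta_3},a+\widetilde{\Delta_3}]$, so
\begin{equation}
\left|\ln\frac{|z_k-r_j|}{|z_k-z_j|}\right| \le \max\left\lbrace -\ln\left(1-\frac{\widetilde{\Delta_3}}{a}\right), \ln\left(1+\frac{\widetilde{\Delta_3}}{a}\right)\right\rbrace = -\ln\left(1-\frac{\widetilde{\Delta_3}}{a}\right) \le -\ln\left(1-\frac{\widetilde{\Delta_3}}{\epsilon_J}\right),
\end{equation}
using $\ln(1+x)\le -\ln(1-x)$ for $0\le x<1$ and monotonicity in $a$. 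Averaging with weight $1/d_{\edge}$ over at most $d_{\edge}$ terms yields the first term of \eqref{eq:rigidity_fluctuations}.

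\textbf{Near terms.} For $j\in J_{\text{near}}$, both $z_j$ and $r_j$ lie in $[z_0,z_{d_{\edge}}]$, so $|z_k-z_j|$ and $|z_k-r_j|$ are both at most $z_{d_{\edge}}-z_0$. By hypothesis both are also at least $\delta z_{\min}$: the first by definition of $\delta z_{\min}$, the second by the lattice assumption. The log of a ratio of two numbers in $[\delta z_{\min}, z_{d_{\edge}}-z_0]$ has absolute value at most $\ln\big((z_{d_{\edge}}-z_0)/\delta z_{\min}\big)$. There are at most $\mathcal{N}(z_k,\epsilon_J)$ such terms, which gives the second term.

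I do not expect a real obstacle; the one point needing care is the far-term estimate. One must use $\epsilon_J>\widetilde{\Delta_3}$ so that the ratio stays positive, and note that the worse of the two directions is the lower one, which produces $-\ln(1-\widetilde{\Delta_3}/\epsilon_J)$ rather than a symmetric bound. I would also check that the monotonicity $r_{j+1}>r_j$ plays no role here; it is presumably needed only later, for the Riemann-sum step toward \eqref{eq:logpotential_thermo}.
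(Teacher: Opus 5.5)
Your proof is correct and follows essentially the same route as the paper. The paper also cancels the $\ln(z_{d-1}-z_0)$ pieces and splits the indices into two groups: those with $|r_j-z_k|>\epsilon_J$, bounded termwise by $-\ln(1-\widetilde{\Delta_3}/\epsilon_J)$ (phrased there via $z_j=r_j+\delta z_j$ and $\bigl|\ln|1-x|\bigr|\le-\ln(1-|x|)$ rather than your triangle-inequality form), and those within $\epsilon_J$ of $z_k$, bounded by $\ln\bigl((z_{d_{\edge}}-z_0)/\delta z_{\min}\bigr)$ because both distances lie in $[\delta z_{\min}, z_{d_{\edge}}-z_0]$; you are also right that the ordering $r_{j+1}>r_j$ plays no role in this argument.
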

\begin{proof}
    We have, writing $z_j = r_j + \delta z_j$,
    \begin{equation}
        \ln \frac{1}{|z_k - z_j|} - \ln \frac{1}{|z_k-r_j|} = \ln\frac{|z_k-r_j|}{|z_k-z_j|} = -\ln\left|1-\frac{\delta z_j}{z_k-r_j}\right|.
        \label{eq:nu_diff_setup}
    \end{equation}
    Separate the $r_j$ into points no further than and further than $\epsilon_J$ from $z_k$:
    \begin{align}
        \mathcal{J}_1 &= \lbrace j: j\in \mathbb{Z}_{d_{\edge}}\setminus \lbrace k\rbrace , |r_j-z_k| > \epsilon_J\rbrace, \nonumber \\
        \mathcal{J}_2 &= \lbrace j: j\in \mathbb{Z}_{d_{\edge}}\setminus \lbrace k\rbrace , |r_j-z_k| \leq \epsilon_J\rbrace.
    \end{align}
    Then, for $j \in \mathcal{J}_1$, noting that $|\delta z_j| \leq \widetilde{\Delta_3}$, and recalling that $\epsilon_J > \widetilde{\Delta_3}$, we have
    \begin{equation}
        \left|-\ln\left|1-\frac{\delta z_j}{z_k-r_j}\right|\right| \leq -\ln \left(1-\frac{|\delta z_j|}{|z_k-r_j|}\right),
    \end{equation}
    with equality attained for $\delta z_j/(z_k-r_j) \geq 0$ and the strict inequality applying otherwise. Overall,
    \begin{equation}
        \left|\ln \frac{1}{|z_k - z_j|} - \ln \frac{1}{|z_k-r_j|}\right| \leq -\ln \left(1-\frac{|\delta z_j|}{|z_k-r_j|}\right)
    \end{equation}
    as $-\ln|1-x|$ is monotonically increasing with $x$ for $x \in [0,1)$, we have
    \begin{equation}
        -\ln \left(1-\frac{|\delta z_j|}{|z_k-r_j|}\right) \leq -\ln\left(1-\frac{\widetilde{\Delta_3}}{\epsilon_J}\right).
        \label{eq:J1bound}
    \end{equation}
    
    For $j \in \mathcal{J}_2$, we apply the following bound that applies for all $j \in \mathbb{Z}_{d_{\edge}}$. We note that there are two different cases: $|z_k - r_j| < |z_k-z_j|$ and vice-versa. We can combine both cases as
    \begin{equation}
    	\left|\ln\frac{|z_k-r_j|}{|z_k-z_j|}\right| = \ln \frac{|z_k-\zeta|}{|z_k-\xi|},
    \end{equation}
    where $(\zeta, \xi) = (r_j, z_j)$ if $|z_k - r_j| \geq |z_k-z_j|$ and $(\zeta, \xi) = (z_j, r_j)$ otherwise. Identifying $|z_k-\xi| \geq \delta z_{\min} \equiv \min_{j \neq k; j,k \in \mathbb{Z}_{d_{\edge}}}|z_j-z_k|$ (by assumption for $\xi = r_j$), and using $|z_k-\zeta| \leq |z_{d_{\edge}}-z_0|$ with the monotonically increasing nature of $\ln x$ for $x > 0$, we have
    \begin{equation}
    	\left|\ln\frac{|z_k-r_j|}{|z_k-z_j|}\right| \leq \ln\left( \frac{z_{d_{\edge}}-z_0}{\delta z_{\min}}\right).
    	\label{eq:J2bound}
    \end{equation}
    Combining Eqs.~\eqref{eq:J1bound} and \eqref{eq:J2bound} with Eq.~\eqref{eq:nu_diff_setup} (and implicitly, $|\mathcal{J}_1| \leq d_{\edge}$), we get
    \begin{equation}
    	\left\lvert \nu_{k,\edge} - \frac{1}{d_{\edge}} \sum_{j \in \mathbb{Z}_{d_{\edge}} \setminus \lbrace k\rbrace} \ln\left(\frac{z_{d-1}-z_0}{|z_k-r_j|}\right)\right\rvert \leq -\ln\left(1-\frac{\widetilde{\Delta_3}}{\epsilon_J}\right) + \frac{|\mathcal{J}_2|}{d_{\edge}}\ln\left( \frac{z_{d_{\edge}}-z_0}{\delta z_{\min}}\right).
    \end{equation}
    Noting that $|\mathcal{J}_2| \leq \mathcal{N}(z_k, \epsilon_J)$, we get Eq.~\eqref{eq:rigidity_fluctuations}.
\end{proof}

While we have made an assumption that the lattice of points does not coincide too closely with $z_k$, this can easily be arranged by shifting the lattice by $\delta z_{\min}$ away from $z_k$ if required, changing $\widetilde{\Delta_3}$ by only an amount comparable to $\delta z_{\min}$. In the continuum limit, assuming that the $r_j$ form an approximately regularly spaced lattice within windows of approximately uniform density of states such that (1) they define a convergent Riemann sum in this limit, and (2) it is reasonable to model their behavior by lattices measuring the ``rigidity'' of the eigenvalues of standard random matrix ensembles\footnote{This is implicitly an assumption about how the thermodynamic limit $d\to\infty$ and $d_{\edge} \to \infty$ is taken in a given system, which we expect to be quite generically true for models such as DSSYK.}, such as the circular unitary ensemble~\cite{Mehta}, we get the following simplifications. 

First, let us allow $\widetilde{\Delta_3}$, $\delta z_{\min}$, and $\epsilon_J$ to depend on $d_{\edge}$ when taking the continuum limit. By the convergence of the Riemann sum, we obtain from Eq.~\eqref{eq:rigidity_fluctuations}
\begin{align}
    &\left\lvert \lim_{d_{\edge} \to \infty} \nu_{k,\edge} - \int_{z_0}^{z_{d_{\edge}}}\diff z\ \widetilde{\rho}(z) \ln\left(\frac{z_{d-1}-z_0}{|z_k-z|}\right)\right\rvert  \nonumber \\
    &\leq \lim_{d_{\edge} \to \infty} \left[-\ln\left(1-\frac{\widetilde{\Delta_3}(d_{\edge})}{\epsilon_J(d_{\edge})}\right) + \frac{\mathcal{N}_{d_{\edge}}(z_k, \epsilon_J(d_{\edge}))}{d_{\edge}}\ln\left( \frac{z_{d_{\edge}}-z_0}{\delta z_{\min}(d_{\edge})}\right)\right].
    \label{eq:nuedgecontinuumbound}
\end{align}
where $\widetilde{\rho}(z) \geq 0$ is the density of states (i.e. $\widetilde{\rho}(z) \propto \mu(z)$), assumed to be well-behaved at $z_k$ and normalized so that
\begin{equation}
    \int\limits_{z_0}^{z_{d_{\edge}}}\diff z\ \widetilde{\rho}(z) = 1.
\end{equation}

We will now make the following weak assumptions about spectral statistics. The first is that $\delta z_{\min}(d_{\edge})/(z_{d_{\edge}}-z_0) = \mathrm{\Omega}(d_{\edge}^{-a})$ for $a > 0$, the second is that $\widetilde{\Delta_3}(d_{\edge})/(z_{d_{\edge}}-z_0) = \mathrm{O}(d_{\edge}^{-\epsilon_3})$ for $\epsilon_3 \in (0,1)$, and the third is:
\begin{equation}
	\mathcal{N}_{d_{\edge}}(z_k, \epsilon_J(d_{\edge})) = \mathrm{O}\left(d_{\edge} \frac{\epsilon_J(d_{\edge})}{z_{d_{\edge}} - z_0}\right),
\end{equation}
which is a direct statement of spectral rigidity (the number of levels within an interval is proportional to the size of the interval,  in order of magnitude). For example, these are rigorously known to be satisfied in CUE, which has with high probability (1) $\delta z_{\min} = \Omega(d^{-4/3-\epsilon_{\min}})$ for $\epsilon_{\min} > 0$ \cite{BenArousBourgade}, (2) $\widetilde{\Delta_3} = O(\ln d)/d$~\cite{ArguinBeliusBourgade}, and (3) the number of levels in an interval being proportional to the length of the interval to leading order is a direct consequence of spectral rigidity~\cite{Mehta}.

Then we get:
	\begin{align}
		&\left\lvert \lim_{d_{\edge} \to \infty} \nu_{k,\edge} - \int_{z_0}^{z_{d_{\edge}}}\diff z\ \widetilde{\rho}(z) \ln\left(\frac{z_{d-1}-z_0}{|z_k-z|}\right)\right\rvert  \nonumber \\
		&\leq \lim_{d_{\edge} \to \infty} \left[-\ln\left(1-\mathrm{O}(d_{\edge}^{-\epsilon_3})\frac{z_{d_{\edge}} - z_0}{\epsilon_J(d_{\edge})}\right) + \mathrm{O}\left( \frac{\epsilon_J(d_{\edge})}{z_{d_{\edge}} - z_0}\right)\ln d_{\edge}\right].
	\end{align}

Therefore, if we choose $\epsilon_J(d_{\edge})$ to scale such that e.g.
\begin{equation}
	\frac{\epsilon_J(d_{\edge})}{z_{d_{\edge}} - z_0} = c_{\epsilon} (\ln d_{\edge})^{-2}
\end{equation}
for some constant, $c_{\epsilon} > 0$ we get the scaling behavior
\begin{align}
	&\left\lvert \lim_{d_{\edge} \to \infty} \nu_{k,\edge} - \int_{z_0}^{z_{d_{\edge}}}\diff z\ \widetilde{\rho}(z) \ln\left(\frac{z_{d-1}-z_0}{|z_k-z|}\right)\right\rvert  \nonumber \\
	&\leq \lim_{d_{\edge} \to \infty} \left[\mathrm{O}(d_{\edge}^{-\epsilon_3} (\ln d_{\edge})^{2}) + \mathrm{O}\left((\ln d_{\edge})^{-1}\right)\right] = 0.
\end{align}

Therefore, in the continuum limit, we have 
\begin{equation}
    \lim_{d_{\edge} \to \infty} \nu_{k, \edge} = \int_{z_0}^{z_{d_{\edge}}}\diff z\ \widetilde{\rho}(z) \ln\left(\frac{z_{d-1}-z_0}{|z_k-z|}\right),
\end{equation}
recovering Eq.~\eqref{eq:logpotential_thermo}. 

\end{appendix}

\vspace{-1em}   

\phantomsection
\section*{References}
\addcontentsline{toc}{section}{References}
\makeatletter
\let\bibsection\relax
\makeatother
\bibliography{refs}
\end{document}